\documentclass[aps,pra,twocolumn,superscriptaddress,amsmath,amssymb]{revtex4-2}

\usepackage{graphicx}
\usepackage{bm}
\usepackage{physics}
\usepackage{hyperref}
\usepackage{mathtools}
\usepackage{braket}
\usepackage{times}
\usepackage{quantikz}
\usepackage{amsthm}
\newtheorem{lemma}{Lemma}
\newtheorem{proposition}{Proposition}

\newcommand{\Cr}{C_{\mathrm r}}
\newcommand{\diag}{\mathrm{diag}}

\begin{document}

\title{Coherence Scaling  in Quantum Communication Protocols}

\author{Pedro Henrique Alvarez}
\affiliation{Instituto de F\'{\i}sica Gleb Wataghin, Universidade Estadual de Campinas (UNICAMP), Campinas, SP, Brazil}
\affiliation{Institut für Physik, Carl-von-Ossietzky Universität Oldenburg, Oldenburg, Germany}

\author{Marcos C\'esar de Oliveira}
\email{marcos@ifi.unicamp.br}
\affiliation{Instituto de F\'{\i}sica Gleb Wataghin, Universidade Estadual de Campinas (UNICAMP), Campinas, SP, Brazil}

\date{\today}

\begin{abstract}
We investigate how quantum coherence scales and is redistributed in quantum
communication protocols, using superdense coding and quantum teleportation as
paradigmatic case studies. Employing the relative entropy of coherence as a
circuit-level resource measure, we show that multipartite resource states
relevant to generalized superdense coding can enable scalable communication
while exhibiting only logarithmic or even constant coherence growth, depending
on their entanglement structure. In sharp contrast, quantum teleportation
displays an unavoidable, protocol-induced coherence cost that grows linearly
with the number of teleported qubits and is independent of the input state.
Through a stage-resolved analysis of the teleportation circuit, we separate
protocol-generated coherence from message-dependent contributions and identify
a universal two-bit coherence offset per teleported qubit at the
maximal-coherence stage. We further demonstrate explicitly that this extensive
intermediate coherence generation is fully consistent with
information-theoretic bounds, including the Holevo limit, and does not
correspond to an increase in accessible classical information.
\end{abstract}

\maketitle

\section{Introduction}
Quantum communication protocols exploit nonclassical resources to transmit
information in ways that are impossible or inefficient classically.
Superdense coding enables the transmission of two classical bits by sending a
single qubit, assisted by shared entanglement, while quantum teleportation
allows the faithful transfer of an unknown quantum state using shared
entanglement and classical communication
\cite{BennettWiesner1992,Bennett1993,NielsenChuang2011,SchumacherWestmoreland2010Book}.
These protocols are cornerstones of quantum information theory and serve as
paradigmatic examples of how quantum resources can be harnessed for
communication tasks.

Entanglement is usually regarded as the central enabling resource in these
protocols. However, entanglement alone does not fully characterize the
operational structure of their circuit implementations. In particular, the
execution of quantum communication protocols involves the generation,
redistribution, and destruction of superpositions at intermediate stages, even
when the net effect of the protocol corresponds to a simple communication
primitive, such as an identity channel. Capturing these dynamical aspects
requires a resource concept that is sensitive to basis-dependent superposition
structure and that can be tracked throughout the protocol execution.

Quantum coherence, defined as the presence of superpositions with respect to a
preferred basis, has emerged as an independent resource within a well-defined
resource-theoretic framework \cite{Baumgratz2014}. Although coherence is
inherently basis-dependent, in gate-based quantum communication protocols the
computational basis is operationally fixed by the encoding, gate, and
measurement structure. In this setting, coherence acquires a concrete
operational meaning: it quantifies the transient superposition structure
explored by a circuit as it realizes a given communication primitive.

Previous studies have shown that coherence depletion is a common feature of
quantum algorithms and can reveal structural properties that are not captured
by entanglement measures alone
\cite{Shi2017GroverCoherence,Liu2019EntropyCoherenceDepletion}. These results
suggest that coherence can act as a diagnostic of internal protocol structure.
Despite this progress, a systematic understanding of how coherence scales in
quantum communication protocols, how it is redistributed during protocol
execution, and how large intermediate coherence can be reconciled with
information-theoretic limits remains underdeveloped.

Recent work has extended the resource theory of quantum coherence beyond static state manipulation, emphasizing dynamic and process-level descriptions relevant to communication protocols. In particular, modern resource-theoretic analyses have clarified tradeoffs between coherence and classical distinguishability \cite{PhysRevA.111.062215} and introduced frameworks for one-shot and process-level coherence manipulation \cite{Luo2025DynamicCoherence,Takagi2024VirtualResource}. These developments complement growing interest in distributed quantum information processing and quantum networks, where teleportation-based primitives and feed-forward control play a central role ( (see, e.g. \cite{DistributedQIP2025Review}.

In this work we address this gap by developing a coherence-based analysis of
quantum communication protocols at the circuit level. We focus on how coherence
scales with system size, how it is redistributed across subsystems during
protocol execution, and how protocol-induced coherence can be separated from
message-dependent contributions. While our discussion is anchored in two
paradigmatic examples—superdense coding and quantum teleportation—we also
formulate general coherence accounting principles applicable to a broad class
of LOCC-assisted quantum communication protocols.

Our main contributions are fourfold. First, we analyze coherence scaling for
multipartite resource states relevant to generalized superdense coding and show
that such states can support scalable communication tasks while exhibiting only
logarithmic or even constant coherence growth. Second, we develop a
stage-resolved analysis of quantum teleportation that reveals an unavoidable
and extensive coherence cost intrinsic to the protocol, independent of the
input state. Third, we derive general bounds showing that large intermediate
coherence can arise purely from protocol-induced branching and conditional
operations, even when the implemented channel is information-neutral. Fourth,
we demonstrate explicitly that this extensive coherence generation is fully
consistent with information-theoretic bounds, including the Holevo limit,
thereby clarifying the conceptual distinction between coherence as a
circuit-level resource and accessible classical information.

Together, these results position coherence as a powerful tool for analyzing the
internal resource structure and superposition complexity of quantum
communication protocols. Beyond the specific protocols studied here, this
perspective provides a framework for understanding coherence overheads in more
general communication settings, including networked and modular architectures
relevant to emerging quantum internet technologies.

This paper is organized as follows. In Sec.~II we introduce the relative entropy
of coherence as a circuit-level resource and summarize the properties relevant
for scalable communication protocols. In Sec.~III we analyze coherence scaling
in superdense coding, focusing on multipartite resource states and contrasting
logarithmic and constant coherence behavior. Section~IV presents a stage-resolved
analysis of quantum teleportation, including coherence redistribution,
parallel-teleportation scaling laws, and the effects of imperfect resources.
In Sec.~V we reconcile extensive protocol-induced coherence with
information-theoretic bounds and clarify its relation to the Holevo limit.
Finally, Sec.~VI summarizes our results and discusses their broader implications
for general quantum communication protocols and networked architectures.

\section{Coherence as a Resource}
\label{sec:coherence}

We quantify quantum coherence using the relative entropy of coherence \cite{Baumgratz2014, PhysRevA.111.062215},
\begin{equation}
\Cr(\rho) = S(\rho_{\diag}) - S(\rho),
\label{eq:Cr}
\end{equation}
where $S(\rho)=-\Tr(\rho\log_2\rho)$ is the von Neumann entropy and $\rho_{\diag}$ is obtained by deleting
all off-diagonal elements of $\rho$ in the computational basis.

Among the various coherence measures proposed in the literature, the relative entropy of coherence is
particularly well suited to the present analysis for two reasons. First, it admits a direct
information-theoretic interpretation as the distinguishability between $\rho$ and its fully dephased
counterpart. Second, it is additive under tensor products, a property essential for analyzing scalable,
multi-register communication protocols and for isolating protocol-induced versus message-dependent
contributions.

Although coherence is inherently basis dependent, this dependence is not a drawback here: the
computational basis is operationally fixed by the encoding, gate, and measurement structure of the
protocol. Coherence with respect to this basis therefore captures physically meaningful superpositions
that are generated, redistributed, and destroyed during protocol execution. From this perspective,
coherence complements entanglement rather than competing with it. While entanglement quantifies
nonclassical correlations between subsystems at a given stage, coherence tracks the superposition
structure of the global circuit state and provides information about the dynamical resources required
to implement a protocol, even when the net action is simple.

\begin{lemma}[Additivity of the relative entropy of coherence]
For any two quantum states $\rho$ and $\sigma$,
\begin{equation}
\Cr(\rho\otimes\sigma)=\Cr(\rho)+\Cr(\sigma).
\end{equation}
\end{lemma}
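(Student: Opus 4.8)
The plan is to reduce the claim to two elementary facts: that the diagonal-part (dephasing) map factorizes over tensor products, and that the von Neumann entropy is additive under tensor products. Writing $\Delta$ for the completely dephasing channel in the computational basis, so that $\rho_{\diag}=\Delta(\rho)$, the first step is to observe that the computational basis of the composite system is, by convention, the product of the computational bases of the factors; hence $\Delta_{AB}=\Delta_A\otimes\Delta_B$ and consequently $(\rho\otimes\sigma)_{\diag}=\rho_{\diag}\otimes\sigma_{\diag}$. Concretely, the matrix element $\bra{ij}\rho\otimes\sigma\ket{kl}=\rho_{ik}\sigma_{jl}$ vanishes unless $i=k$ and $j=l$, so deleting all off-diagonal entries of $\rho\otimes\sigma$ in the product basis leaves exactly $\rho_{\diag}\otimes\sigma_{\diag}$.

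Next I would invoke additivity of the von Neumann entropy, $S(\alpha\otimes\beta)=S(\alpha)+S(\beta)$, applied both to $\rho\otimes\sigma$ and to $\rho_{\diag}\otimes\sigma_{\diag}$. Substituting into the definition \eqref{eq:Cr},
\begin{equation}
\Cr(\rho\otimes\sigma)=S\big((\rho\otimes\sigma)_{\diag}\big)-S(\rho\otimes\sigma)=S(\rho_{\diag})+S(\sigma_{\diag})-S(\rho)-S(\sigma),
\end{equation}
which regroups as $\Cr(\rho)+\Cr(\sigma)$, completing the argument. Alternatively, the same computation can be phrased purely in terms of probability distributions: the diagonal of $\rho\otimes\sigma$ is the product distribution of the diagonals of $\rho$ and of $\sigma$, and the Shannon entropy is additive over product distributions, which yields $S(\rho_{\diag}\otimes\sigma_{\diag})=S(\rho_{\diag})+S(\sigma_{\diag})$ directly.

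There is essentially no hard step here; the only point requiring care is the justification that dephasing commutes with the tensor product, which relies on the (standard, but genuinely necessary) assumption that the reference basis on the joint space is the product of the factor bases — additivity can fail for an entangled reference basis. I would therefore state this assumption explicitly, consistent with the operational fixing of the computational basis discussed above, and otherwise present the proof as the short two-line chain of equalities indicated.
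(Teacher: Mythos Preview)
Your argument is correct: factorization of the dephasing map in the product computational basis together with additivity of the von Neumann entropy immediately gives $\Cr(\rho\otimes\sigma)=\Cr(\rho)+\Cr(\sigma)$, and you are right to flag that this hinges on the reference basis being the tensor product of the factor bases. The paper states this lemma without proof, so there is no alternative derivation to compare against; your short chain of equalities is the standard argument and is entirely adequate here.
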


This additivity property will play a central role in the analysis of parallel teleportation, where it
enables a clean separation between protocol-induced coherence costs and message-dependent contributions.

\section{Superdense Coding and Coherence Scaling}
\label{sec:sd}
In superdense coding, Alice and Bob share an entangled resource state. By applying a set of local
unitaries to her subsystem, Alice encodes classical messages, which are subsequently decoded by Bob via
a joint measurement after transmission of a subset of the qubits \cite{BennettWiesner1992,NielsenChuang2011}.
In generalized multipartite versions, the number of classical bits transmitted grows linearly with the
number of qubits sent, provided sufficiently entangled resource states are available.

Multipartite entanglement is not a monolithic resource. It exhibits inequivalent classes that cannot be
interconverted by stochastic local operations and classical communication, most notably the GHZ and $W$
families \cite{DurVidalCirac2000}. While these classes are often discussed in terms of entanglement, they
also possess markedly different coherence structures with respect to the computational basis, which is
operationally fixed by the encoding and decoding operations in superdense coding.

For the $n$-qubit $W$ state,
\begin{equation}
\ket{W_n}=\frac{1}{\sqrt{n}}\sum_{k=1}^n \ket{0\cdots1_k\cdots0},
\end{equation}
the relative entropy of coherence is exactly
\begin{equation}
\Cr(\ket{W_n}\bra{W_n})=\log_2 n,
\end{equation}
reflecting the fact that the state is supported on $n$ computational-basis components with equal amplitudes.
By contrast, the $n$-qubit GHZ state,
\begin{equation}
\ket{GHZ_n}=\frac{1}{\sqrt{2}}\left(\ket{0}^{\otimes n}+\ket{1}^{\otimes n}\right),
\end{equation}
has constant coherence $\Cr(\ket{GHZ_n}\bra{GHZ_n})=1$, independent of system size.

\begin{figure}[t]
\centering
\includegraphics[width=0.9\columnwidth]{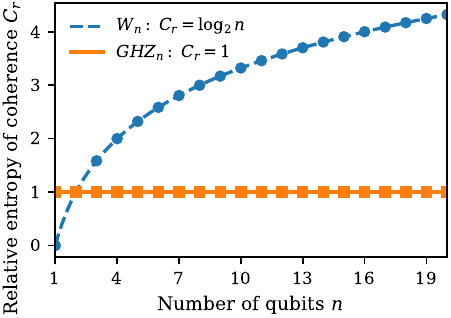}
\caption{Relative entropy of coherence for two paradigmatic multipartite resource
states relevant to generalized superdense coding. The $W_n$ state exhibits
logarithmic scaling, $C_{r}=\log_2 n$, while the $GHZ_n$ state has constant
coherence $C_{r}=1$ independent of $n$. This illustrates that coherence scaling
need not track communication capability.}
\label{fig:Wn}
\end{figure}

Fig.~\ref{fig:Wn}  illustrates a key conceptual point: multipartite resource states capable of
supporting scalable communication tasks can exhibit drastically different coherence scaling. In particular, the
amount of coherence present in the shared resource state does not track either the size of the system or the
communication capacity in a straightforward way. Superdense coding therefore provides a natural counterpoint to
teleportation: whereas teleportation necessarily generates extensive intermediate coherence as part of its circuit
implementation, superdense coding can rely on resource states whose coherence grows only logarithmically or remains
constant. This contrast underscores that coherence is a protocol-dependent quantity whose operational role must be
assessed relative to the specific circuit realization of the communication task.

\section{Quantum Teleportation: Stage-Resolved Coherence}
\label{sec:tp}
Quantum teleportation enables the faithful transfer of an unknown quantum state using shared entanglement and
classical communication, without physically transmitting the quantum system itself \cite{Bennett1993,NielsenChuang2011}.
Although teleportation is often described as implementing an effective identity channel on the input state, the
circuit realization of this channel involves nontrivial intermediate dynamics, including entangling operations and
measurements.

To uncover the coherence resources required by teleportation, we analyze the protocol in a stage-resolved manner.
We introduce a stage index defined by fixed circuit cuts after the application of each gate, and evaluate the relative entropy of coherence of the global
circuit state at each stage, as represented in Fig. \ref{teleport1}.  Measurement is treated non-selectively, i.e., as complete dephasing in the measurement
basis, which captures operational coherence loss without conditioning on outcomes. This approach allows us to
distinguish coherence generated and consumed by the protocol itself from coherence carried by the input message.

\begin{figure}\centering
\includegraphics[width=8.0cm]{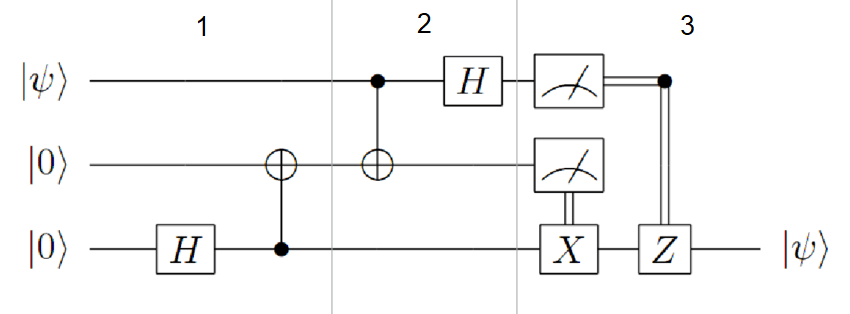}
 \caption{Quantum circuit representing the protocol for quantum teleportation based on \cite{NielsenChuang2011}, the gray vertical lines separate the three stages of the process, first creating the EPR pair, second the teleportation process and the third and last phase the message extraction.}
\label{teleport1}\end{figure}
\subsection{Total coherence across stages}

\begin{figure}[t]
\centering
\includegraphics[width=\columnwidth]{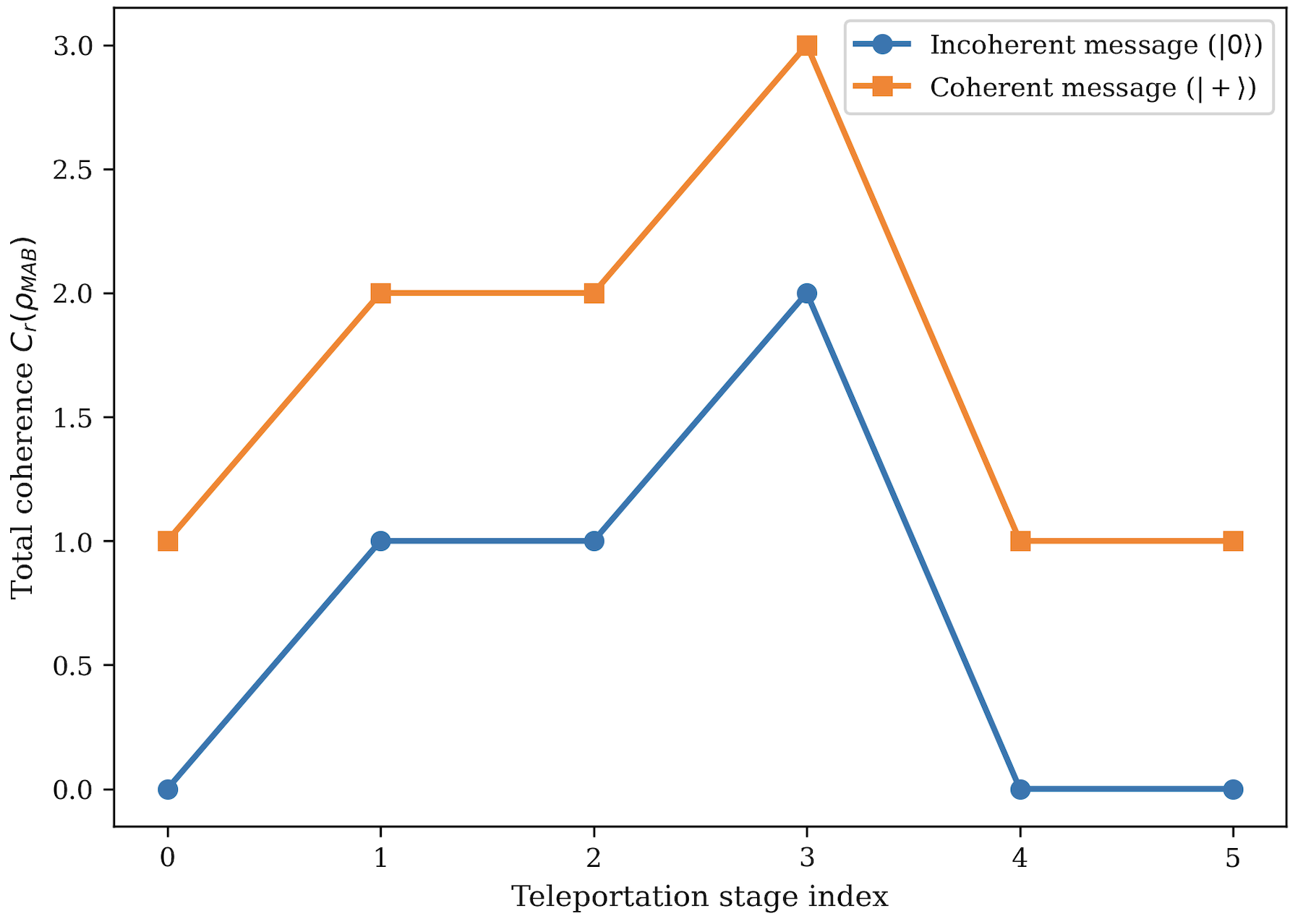}
\caption{Total relative entropy of coherence across teleportation stages. Coherence is generated by entangling gates,
peaks immediately before measurement, and is suppressed by non-selective measurement.}
\label{fig:tp_total}
\end{figure}

Figure~\ref{fig:tp_total} reveals a characteristic coherence profile intrinsic to teleportation. Even when the input
state is incoherent, the total coherence increases substantially due to entangling operations that correlate the message
system with the shared Bell pair. This buildup culminates at a maximal stage immediately prior to measurement, after
which coherence is sharply reduced by the dephasing induced by non-selective measurement.

Importantly, this behavior reflects a structural property of teleportation circuits: implementing an identity channel
on an unknown quantum state requires the generation of global superpositions at intermediate stages. In this sense,
teleportation exhibits an unavoidable, protocol-induced coherence cost that is independent of the message being
transmitted.

\subsection{Redistribution and partial coherences}

\begin{figure}[t]
\centering
\includegraphics[width=\columnwidth]{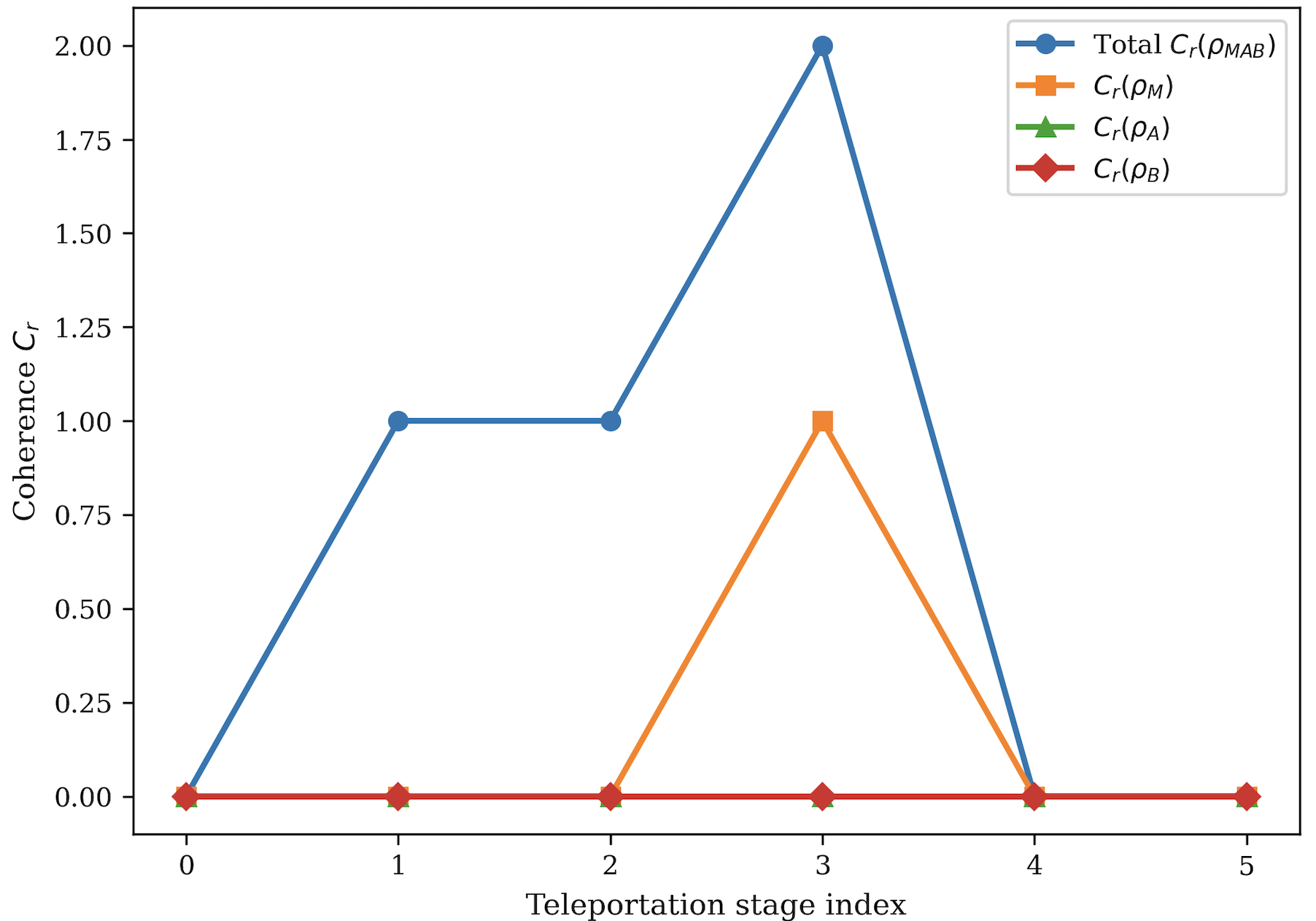}
\caption{Single-qubit reduced-state coherences during teleportation of an incoherent input state $\ket{0}$, illustrating
redistribution of coherence among subsystems.}
\label{fig:tp_partials}
\end{figure}

While Fig.~\ref{fig:tp_total} captures total coherence of the global circuit state, it does not indicate where this
coherence resides. To address this question, we examine reduced-state coherences of individual subsystems during the
protocol. Figure~\ref{fig:tp_partials} shows that, despite the absence of initial coherence in the input message,
coherence becomes redistributed among the message qubit and the entangled ancillae.

A key observation is that the sum of local coherences does not account for the total coherence. Most coherence generated
during teleportation remains delocalized and encoded in genuinely multipartite superpositions. This delocalization is
not merely a bookkeeping subtlety: it is tightly connected to the conditional structure of teleportation. The circuit
creates correlations that allow a classical measurement outcome to select the appropriate correction on Bob’s side, and
the global superposition structure is the mechanism by which this conditional reconstruction becomes possible. In this
sense, delocalized coherence is an internal resource enabling the protocol’s classical control logic.

\subsection{Scaling law for parallel teleportation}

We now turn to the scaling behavior of coherence when teleporting multi-qubit messages. Consider teleporting an $n$-qubit
state using $n$ independent teleportation gadgets (complete circuit) in parallel. At any fixed stage of the circuit, the global unitary
evolution factorizes into $n$ identical components acting on disjoint registers, together with the message state.

\begin{proposition}[Stage-resolved coherence decomposition]
At any fixed stage $i$,
\begin{equation}
\Cr^{TP_i}(\ket{\psi_n}\bra{\psi_n}) = n\,\Cr^{TP_i}(\ket{0}\bra{0}) + \Cr(\ket{\psi_n}\bra{\psi_n}).
\end{equation}
\end{proposition}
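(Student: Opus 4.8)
The plan is to split the claimed identity into a ``diagonal'' part and a ``von Neumann'' part and reduce each to one structural property of the teleportation circuit. Let $\rho^{TP_i}(\sigma)$ be the stage-$i$ circuit state with message input $\sigma$, and expand $\ket{\psi_n}=\sum_{x\in\{0,1\}^n}c_x\ket{x}$ in the computational basis. Since the $n$ gadgets act on disjoint qubit triples, each supplying its own fresh $\ket{00}$, the stage-$i$ evolution of the message register is a tensor power $\mathcal N_i^{\otimes n}$ of a single $1\!\to\!3$-qubit channel $\mathcal N_i$; writing $\rho_i^{(a)}:=\mathcal N_i(\ket{a}\!\bra{a})$ for the single-gadget stage-$i$ state (so that $\Cr^{TP_i}(\ket{0}\!\bra{0})=\Cr(\rho_i^{(0)})$), we have
\begin{equation}
\rho^{TP_i}(\ket{\psi_n}\!\bra{\psi_n})=\sum_{x,y}c_xc_y^{*}\,\bigotimes_{k=1}^n\mathcal N_i\!\big(\ket{x_k}\!\bra{y_k}\big).
\label{eq:stageexpand}
\end{equation}
Abbreviating $\rho^{TP_i}:=\rho^{TP_i}(\ket{\psi_n}\!\bra{\psi_n})$ and using $\Cr(\rho)=S(\rho_{\diag})-S(\rho)$ together with $\Cr(\ket{\psi_n}\!\bra{\psi_n})=S\big((\ket{\psi_n}\!\bra{\psi_n})_{\diag}\big)$ for the pure input, the proposition reduces to the pair
\begin{equation}
S\big((\rho^{TP_i})_{\diag}\big)=n\,S\big((\rho_i^{(0)})_{\diag}\big)+S\big((\ket{\psi_n}\!\bra{\psi_n})_{\diag}\big),
\label{eq:diagpart}
\end{equation}
\begin{equation}
S(\rho^{TP_i})=n\,S(\rho_i^{(0)}),
\label{eq:vnpart}
\end{equation}
whose difference is exactly the claim. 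For a product message $\ket{\psi_n}=\bigotimes_k\ket{\phi_k}$ both lines are immediate from Lemma~1 and the $n=1$ case, so the work is the entangled case.

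Identity \eqref{eq:diagpart} rests on what I will call the \emph{classical-tag property} of the standard teleportation circuit: at every stage some fixed linear function of a gadget's computational label equals the input bit — it is the message-qubit value until the Hadamard acts on that qubit, the parity of Alice's and Bob's Bell-pair qubits afterward, and Bob's qubit value after the correction. This makes $\rho_i^{(0)}$ and $\rho_i^{(1)}$ computationally block-diagonal states with \emph{disjoint} computational support that are related by conjugation with a Pauli operator (a permutation times a diagonal unitary); hence $S\big((\rho_i^{(0)})_{\diag}\big)=S\big((\rho_i^{(1)})_{\diag}\big)$. I would establish this stage by stage from the Clifford structure of the circuit, the only care being to check that the non-selective measurement commutes with the permutations and phases involved so that the tag survives it. Given the property, every computational-diagonal entry of $\mathcal N_i(\ket{a}\!\bra{b})$ with $a\neq b$ vanishes, so the $x\neq y$ cross terms in \eqref{eq:stageexpand} drop from the diagonal, leaving $(\rho^{TP_i})_{\diag}=\sum_x|c_x|^2\bigotimes_k(\rho_i^{(x_k)})_{\diag}$ — a mixture of mutually orthogonal states, since distinct $x$ disagree in some gadget. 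By the grouping identity for block-diagonal states and additivity over tensor factors (Lemma~1) this gives $S\big((\rho^{TP_i})_{\diag}\big)=S\big((\ket{\psi_n}\!\bra{\psi_n})_{\diag}\big)+\sum_x|c_x|^2\sum_k S\big((\rho_i^{(x_k)})_{\diag}\big)$, and the equal-diagonal-entropy fact collapses the inner double sum to $n\,S\big((\rho_i^{(0)})_{\diag}\big)$, which is \eqref{eq:diagpart}.

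For \eqref{eq:vnpart} I would argue that the global entropy at any stage is message-independent. Its sole source is whatever non-selective measurement(s) have occurred up to stage $i$; purifying each by an ancilla that records its outcome keeps the global-plus-ancilla state pure, so $S(\rho^{TP_i})$ equals the entropy of those ancillas. The measured qubits are maximally mixed immediately before the Bell measurement for any single-qubit input, and — the point to verify — they stay \emph{jointly} maximally mixed across the $n$ gadgets even when $\ket{\psi_n}$ correlates them: a short computation with the branch relation $X^{m_2}Z^{m_1}\ket{x_k}\propto\ket{x_k\oplus m_2}$ shows that the reduced state of all measured qubits is the normalized identity regardless of the $c_x$. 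Hence the ancillas carry a message-independent number of maximally mixed qubits, so $S(\rho^{TP_i})$ takes the same value — $0$ before the measurement, $2n$ after — as $n\,S(\rho_i^{(0)})$, giving \eqref{eq:vnpart}.

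Subtracting \eqref{eq:vnpart} from \eqref{eq:diagpart} yields the proposition. The routine parts are the stage-by-stage tracking of $\mathcal N_i$ and the cross-term cancellation in \eqref{eq:stageexpand}; the real content — and the main obstacle — is the pair of structural facts: the classical-tag/disjoint-support property with equal diagonal entropies, and the message-independence of the measurement-induced entropy. Together they say that, up to computational dephasing, $\mathcal N_i$ copies the input value into a tag of fixed entropy and appends a message-independent state, which is precisely why the stage-resolved coherence splits cleanly into the extensive protocol term $n\,\Cr^{TP_i}(\ket{0}\!\bra{0})$ and the bare message term $\Cr(\ket{\psi_n}\!\bra{\psi_n})$.
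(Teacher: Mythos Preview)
Your argument is correct and considerably more detailed than what the paper itself supplies. In the paper, the proposition is stated immediately after the one-line remark that ``the global unitary evolution factorizes into $n$ identical components acting on disjoint registers, together with the message state,'' and no proof is given; the intended justification is essentially Lemma~1 (additivity of $\Cr$) applied to that tensor factorization. As you correctly note, additivity alone only covers product inputs $\ket{\psi_n}=\bigotimes_k\ket{\phi_k}$, and the entangled case genuinely requires more.

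What you add is precisely the missing structural content. The classical-tag/disjoint-support property of $\rho_i^{(0)}$ and $\rho_i^{(1)}$ at every stage, together with the Pauli-conjugation symmetry that equalizes their diagonal entropies, handles your Shannon-entropy identity; and the uniformity of Bell-measurement outcomes --- valid jointly across the $n$ gadgets even for entangled inputs, via the standard decomposition $\ket{\Psi}=2^{-n}\sum_m\ket{m}_A\otimes\sigma_m\ket{\psi_n}_B$ with $\sigma_m$ unitary --- handles your von~Neumann identity. Both facts are straightforward consequences of the Clifford structure of the teleportation circuit, and your stage-by-stage verification goes through (including the point that dephasing and the Pauli corrections are incoherent operations, so the tag survives them). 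Your proof therefore establishes the proposition in full generality, whereas the paper's remark-plus-additivity route, read literally, only proves the product-input case and the single-qubit case illustrated in Fig.~\ref{fig:tp_angle}.
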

This decomposition isolates two distinct contributions -- a protocol-induced term that scales extensively with the number
of teleportation gadgets ($C_r^{TP_i}$), and a message-dependent term that depends only on the intrinsic coherence of the input state, ($C_r$).

At the maximal-coherence stage, the single-gadget contribution takes the universal value $\Cr^{TP}(\ket{0}\bra{0})=2$, as we saw in Fig. \ref{fig:tp_partials}, yielding
\begin{equation}
\Cr^{TP}(\ket{\psi_n}\bra{\psi_n}) = 2n + \Cr(\ket{\psi_n}\bra{\psi_n}).
\label{eq:scaling}
\end{equation}
This result shows that teleportation incurs a fixed coherence overhead of two bits per teleported qubit, regardless of
the structure of the input state.

\paragraph*{Effect of decoherence and imperfect resource states.}
The scaling law in Eq.~\eqref{eq:scaling} is derived for ideal Bell pairs and noiseless gates. In the presence of
decoherence or imperfect entangled resources, the total coherence generated during teleportation is reduced
quantitatively but not qualitatively altered. Imperfect Bell pairs decrease the maximal attainable coherence per gadget,
effectively renormalizing the constant offset in Eq.~\eqref{eq:scaling}, while preserving linear scaling with the number
of teleported qubits as long as the protocol remains operational. Local decoherence and dephasing primarily suppress
delocalized coherence at intermediate stages, leading to a reduced peak height in Figs.~\ref{fig:tp_total} and \ref{fig:tp_partials}, but do not
eliminate the protocol-induced character of the coherence cost. This highlights that extensive coherence generation is a
structural feature of teleportation circuits rather than an artifact of idealized resources.

\begin{figure}[t]
\centering
\includegraphics[width=0.9\columnwidth]{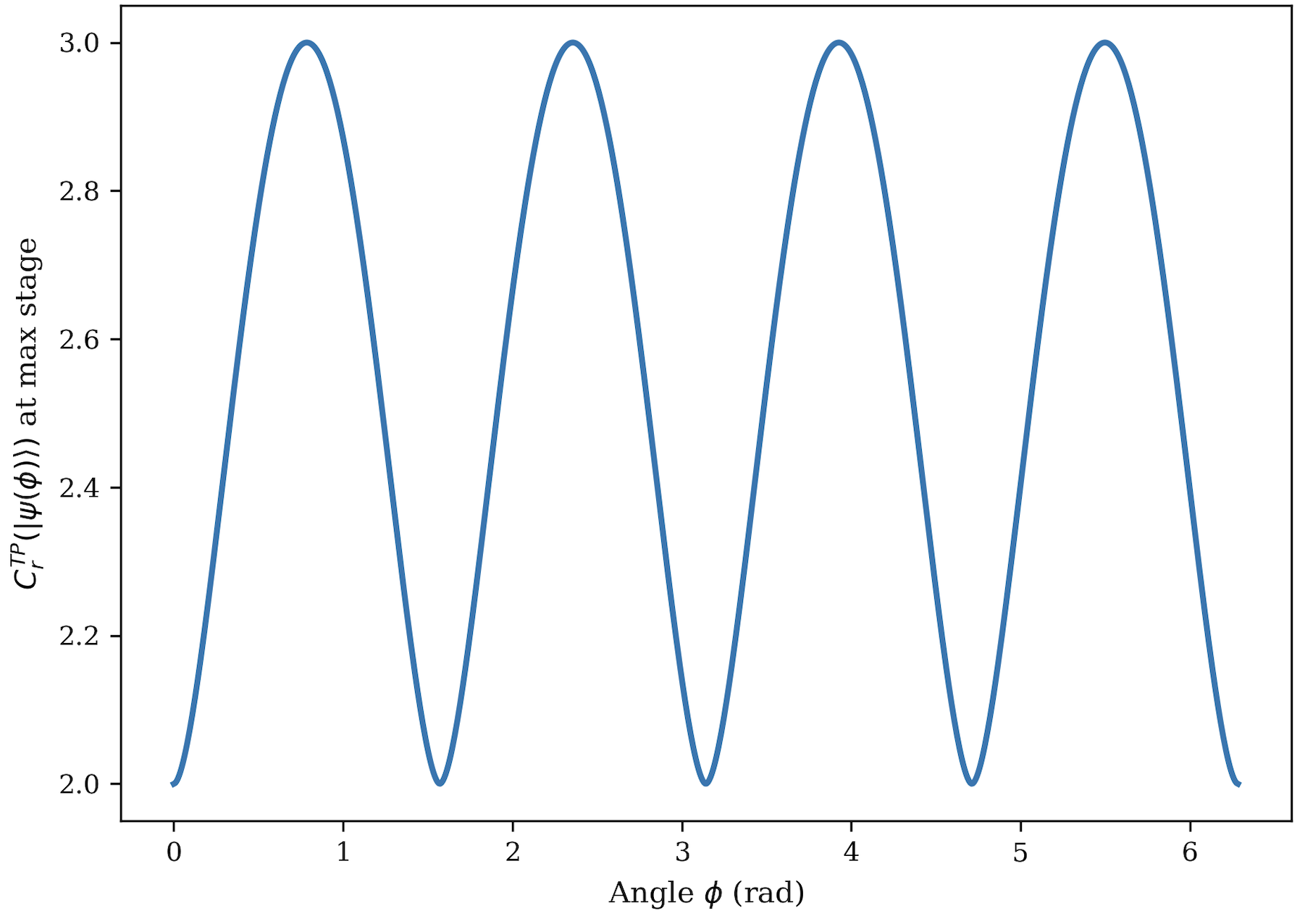}
\caption{Message-dependent coherence contribution for teleportation of a single-qubit state
$\ket{\psi(\phi)}=\cos\phi\ket{0}+\sin\phi\ket{1}$.}
\label{fig:tp_angle}
\end{figure}

As a simple illustration in Figure~\ref{fig:tp_angle} we plot the additive nature of the message-dependent term, from Eq. (\ref{eq:scaling}) for an arbitrary qubit state \begin{equation}
   \ket{\psi} = \cos{\frac{\theta}{2}}\ket{0} + e^{i\gamma}\sin{\frac{\theta}{2}}\ket{1},
\end{equation} 
taking $\gamma=0$, without loss of generality, and defining $\phi=\theta/2$. The variation with $\phi$
reflects solely the coherence of the input state, while the constant offset corresponds to the protocol-induced
contribution. This separation underscores the universality of the coherence cost associated with teleportation and
clarifies that extensive coherence generation is a property of the protocol itself, not of the information being
transmitted.

\section{Consistency with Information-Theoretic Bounds}
\label{sec:holevo}
The extensive coherence generated during teleportation might at first sight appear to conflict with information-theoretic
limits on communication. In particular, the linear scaling $\Cr^{TP}(\ket{\psi_n}\bra{\psi_n}) = 2n + \Cr(\ket{\psi_n}\bra{\psi_n})$ could be
misread as an unbounded growth of informational content. Here we show explicitly that no such violation occurs and
clarify the operational meaning of coherence in relation to the Holevo bound.

For an ensemble of quantum states $\mathcal{E}=\{p_x,\rho_x\}$ encoding a classical random variable $X$, the accessible
classical information, $I_{\mathrm{acc}}$,   obtainable by any measurement is bounded by the Holevo quantity
\cite{Holevo1973,NielsenChuang2011}
\begin{equation}
I_{\mathrm{acc}} \le \chi(\mathcal{E})
= S\!\left(\sum_x p_x\rho_x\right)-\sum_x p_x S(\rho_x).
\label{eq:holevo}
\end{equation}
Crucially, $\chi$ depends only on the ensemble of states available at the output and is insensitive to intermediate
circuit dynamics.

Teleportation implements an identity channel on quantum states. If Alice prepares an ensemble $\mathcal{E}$ prior to
teleportation, the ensemble received by Bob is identical, $\rho_x \mapsto \rho_x$, and therefore
\begin{equation}
\chi_{\mathrm{out}} = \chi_{\mathrm{in}}.
\end{equation}
This equality holds irrespective of the amount of coherence generated at intermediate stages.

The origin of the apparent tension is that the relative entropy of coherence quantifies basis-dependent superposition
structure of a quantum state rather than ensemble distinguishability. During teleportation, large values of $\Cr$ arise
from global superpositions involving the message system and the entangled ancillae. These superpositions are the
mechanism by which measurement outcomes become correlated with the appropriate correction on Bob’s side, but they are
rendered operationally inaccessible by the measurement-induced dephasing inherent to the protocol.

This distinction is particularly clear when contrasting teleportation with superdense coding. In superdense coding, Alice
actively maps classical messages onto a set of orthogonal quantum states, thereby increasing ensemble distinguishability
and saturating the Holevo bound. In teleportation, by contrast, no new distinguishability is created: the protocol
faithfully transfers the original quantum encoding without amplifying its classical information content.

From a broader viewpoint, the extensive coherence generated during teleportation quantifies the transient superposition
complexity explored by the circuit in order to realize an identity channel on an unknown quantum input. It therefore
captures an internal dynamical resource requirement rather than any increase in information transmissibility, and is
fully consistent with the Holevo bound.

\begin{lemma}[Coherence is not an information monotone]
The relative entropy of coherence is not monotonic under completely positive trace-preserving maps that involve classical
communication and conditional operations. In particular, protocols such as teleportation can generate large intermediate
coherence without increasing the Holevo quantity of any input ensemble.
\end{lemma}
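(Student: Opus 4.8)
The plan is to prove the two assertions of the lemma separately, since they carry rather different content. For the first --- failure of monotonicity under CPTP maps that involve classical communication and conditional operations --- I would begin by recalling that $\Cr$ is a \emph{bona fide} coherence monotone only with respect to the restricted class of incoherent operations \cite{Baumgratz2014}; the statement is therefore not a deficiency of $\Cr$ but the assertion that the elementary steps realizing teleportation --- the basis-changing unitary preceding the Bell measurement, the non-selective measurement itself, and the classically controlled Pauli corrections --- lie outside that class. A minimal witness suffices: the map $\rho \mapsto V(\rho\otimes\ket{\beta}\bra{\beta})V^{\dagger}$ that appends a Bell pair and applies the pre-measurement teleportation unitaries is CPTP, yet by the stage-resolved decomposition of the preceding Proposition (and the single-gadget value seen in Fig.~\ref{fig:tp_partials}) it sends the incoherent state $\ket{0}\bra{0}$, with $\Cr=0$, to a state with $\Cr^{TP}(\ket{0}\bra{0})=2$. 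Hence $\Cr$ strictly increases under an operation of exactly the type that implements the conditional logic of the protocol.

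For the second assertion I would combine the coherence accounting of Eq.~\eqref{eq:scaling} with the data-processing behavior of the Holevo quantity. Teleportation implements the identity channel on the message register, so for any input ensemble $\mathcal{E}=\{p_x,\rho_x\}$ the output ensemble is again $\{p_x,\rho_x\}$, and since $\chi$ is monotone non-increasing under CPTP maps (here an isometric, indeed unitary, channel) one has $\chi_{\mathrm{out}}=\chi_{\mathrm{in}}=\chi(\mathcal{E})$; the Holevo bound \eqref{eq:holevo} then caps the information Bob can extract at $I_{\mathrm{acc}}\le\chi(\mathcal{E})$, a fixed quantity. On the other hand, running $n$ gadgets in parallel, the coherence of the global circuit state at the maximal stage is $\Cr^{TP}(\ket{\psi_n}\bra{\psi_n})=2n+\Cr(\ket{\psi_n}\bra{\psi_n})$, which grows without bound in $n$ while $\chi(\mathcal{E})$ is held fixed (for instance, a fixed single-letter ensemble transmitted blockwise). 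Therefore no inequality of the form ``accessible information $\ge$ increasing function of the intermediate $\Cr$'' can hold, which is exactly the claimed non-monotonicity.

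To make the dissociation sharp I would close with two one-line observations showing that $\Cr$ and $\chi$ are logically independent in both directions: a single coherent pure state --- an ensemble of size one, say $\ket{+}$ --- has $\chi=0$ but $\Cr=1$, whereas the ensemble $\{(\tfrac12,\ket{0}\bra{0}),(\tfrac12,\ket{1}\bra{1})\}$ has $\chi=1$ but contributes no coherence, every member and their average being diagonal in the computational basis. Neither quantity upper-bounds the other, so coherence generated, whether at an intermediate stage or at the output, carries no implication for transmissible classical information.

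I expect the main obstacle to be conceptual rather than computational: stating precisely which two quantities are being compared and on which system, so that the comparison is fair. One must be careful that $\chi$ is evaluated on the output ensemble (where the Holevo bound governs what is extractable after the full, measurement-included protocol), whereas the large $\Cr$ lives on the global state at an intermediate circuit cut; and one must foreground that $\Cr$ remains monotone under the designated free operations of its own resource theory, so that the lemma reads as a statement about the resourcefulness-for-coherence of information-neutral operations rather than as a pathology of the measure.
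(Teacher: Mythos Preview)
Your proposal is correct and follows essentially the same line of argument as the paper: the paper does not give a separate formal proof of this lemma but treats it as a summary of the discussion in Sec.~\ref{sec:holevo}, which combines the identity-channel property $\chi_{\mathrm{out}}=\chi_{\mathrm{in}}$ with the explicit coherence computation of Eq.~\eqref{eq:scaling} and Appendix~\ref{app:two_bits}---precisely the ingredients you invoke. Your additional one-line examples ($\ket{+}$ versus the diagonal ensemble) and the care about where $\chi$ and $\Cr$ are evaluated are nice sharpenings not present in the paper; one small inaccuracy is that non-selective measurement (dephasing) and Pauli corrections are in fact incoherent operations, so the coherence-generating step is specifically the Hadamard/basis-change, but your concrete witness map already isolates this correctly.
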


This lemma formalizes the conceptual distinction between coherence as a circuit-level resource and information capacity
as constrained by information-theoretic bounds.

\section{General coherence accounting for arbitrary communication protocols}
\label{sec:general}
Given the simple fundamental facts, now we show that the stage-resolved approach extends beyond teleportation and superdense coding.
In this section we formulate a general coherence accounting framework for an
arbitrary LOCC-assisted quantum communication protocol and derive bounds that
separate (i) message-dependent coherence from (ii) protocol-induced coherence
associated with branching, classical feed-forward, and multipartite control.

\subsection{Protocol model and stage states}
\label{subsec:protocol_model}
Consider a generic communication protocol $\Pi$ acting on a message register $M$
(initial state $\rho_M$), a set of resource registers $R$ shared between nodes
(initial state $\sigma_R$), and (optionally) ancillary work registers $A$
initialized incoherently in the computational basis. The protocol consists of
an alternating sequence of (i) global or local unitaries $U_i$ (possibly
conditioned on a classical record) and (ii) non-selective measurements, which we
model as complete dephasing channels $\Delta$ in the computational basis of the
measured registers. The stage-$i$ circuit state is
\begin{equation}
\rho_i = \Lambda_i\!\left(\rho_M \otimes \sigma_R \otimes \ket{0}\!\bra{0}_A\right),
\label{eq:rho_i_general}
\end{equation}
where $\Lambda_i$ is the (generally LOCC) completely positive trace-preserving
(CPTP) map representing the first $i$ steps of the protocol.

We define the \emph{coherence profile} of the protocol as the sequence
\begin{equation}
{\Cr}_{i} := \Cr(\rho_i), \qquad i=0,1,\dots,
\end{equation}
and the \emph{peak coherence cost} (a protocol-complexity functional) as
\begin{equation}
\mathcal{C}_{\max}(\Pi;\rho_M,\sigma_R) := \max_i \Cr(\rho_i).
\label{eq:Cmax_def}
\end{equation}
In the teleportation analysis, $\mathcal{C}_{\max}$ is attained at the stage
immediately prior to measurement.

\subsection{A branching-entropy identity for coherent control}
\label{subsec:branching_identity}
A key mechanism behind large intermediate coherence in LOCC protocols is
\emph{branching} -- an (implicit) coherent superposition over classical outcomes
that later becomes a classical record via dephasing (measurement). The following identity
makes this precise.

\begin{lemma}[Coherence of a branch superposition]
\label{lem:branch_superposition}
Let $\ket{\Psi}$ be a pure state on $C\otimes Q$ of the form
\begin{equation}
\ket{\Psi}=\sum_{k=1}^{m}\sqrt{p_k}\,\ket{k}_C\ket{\psi_k}_Q,
\label{eq:branch_state}
\end{equation}
where $\{\ket{k}_C\}$ is an orthonormal set of computational-basis states and
each $\ket{\psi_k}$ is a normalized pure state on $Q$. Then the relative entropy
of coherence (with respect to the product computational basis) satisfies
\begin{equation}
\Cr(\ket{\Psi}\bra{\Psi}) = H(\{p_k\}) + \sum_{k=1}^{m} p_k\,\Cr(\ket{\psi_k}\bra{\psi_k}),
\label{eq:Cr_branch_identity}
\end{equation}
where $H(\{p_k\})=-\sum_k p_k\log_2 p_k$ is the Shannon entropy.
\end{lemma}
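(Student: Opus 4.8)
The plan is to exploit the purity of $\ket{\Psi}$, which gives $S(\ket{\Psi}\bra{\Psi})=0$, so that by the definition in Eq.~\eqref{eq:Cr} the statement reduces to the single identity $\Cr(\ket{\Psi}\bra{\Psi})=S\big((\ket{\Psi}\bra{\Psi})_{\diag}\big)$. Everything then amounts to computing the von Neumann entropy of the fully dephased state of $\ket{\Psi}$ in the product computational basis of $C\otimes Q$ and recognizing its two pieces.

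First I would expand each branch state in the computational basis of $Q$, $\ket{\psi_k}=\sum_j c^{(k)}_j\ket{j}_Q$, so that $\ket{\Psi}=\sum_{k,j}\sqrt{p_k}\,c^{(k)}_j\,\ket{k}_C\ket{j}_Q$. Forming $\rho=\ket{\Psi}\bra{\Psi}$ and deleting all off-diagonal elements in the product basis $\{\ket{k}_C\ket{j}_Q\}$, the orthonormality of the $\ket{k}_C$ kills every cross term between distinct branches, leaving the classical state $\rho_{\diag}=\sum_{k,j}p_k\,|c^{(k)}_j|^2\,\ket{k}_C\ket{j}_Q\bra{k}_C\bra{j}_Q$; equivalently, $\rho_{\diag}=\bigoplus_k p_k\,(\ket{\psi_k}\bra{\psi_k})_{\diag}$ is block diagonal over the branch label. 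Its eigenvalues are $\{p_k|c^{(k)}_j|^2\}$, and splitting $\log_2(p_k|c^{(k)}_j|^2)=\log_2 p_k+\log_2|c^{(k)}_j|^2$ together with $\sum_j|c^{(k)}_j|^2=1$ yields $S(\rho_{\diag})=H(\{p_k\})+\sum_k p_k\,S\big((\ket{\psi_k}\bra{\psi_k})_{\diag}\big)$; this is just the standard grouping identity for the entropy of a block-diagonal state with sub-normalized blocks $p_k\tau_k$, so no real calculation is needed. Finally, each $\ket{\psi_k}$ is pure, hence $S\big((\ket{\psi_k}\bra{\psi_k})_{\diag}\big)=\Cr(\ket{\psi_k}\bra{\psi_k})$ by Eq.~\eqref{eq:Cr}, which gives Eq.~\eqref{eq:Cr_branch_identity}.

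The computation itself is routine; the one ingredient that genuinely matters—and which I would flag explicitly—is the factorization of the dephasing step. It relies on the hypothesis that the $\{\ket{k}_C\}$ are \emph{genuine computational-basis vectors}, not merely an orthonormal set, so that dephasing in the product computational basis acts as dephasing of $C$ in its own basis while simultaneously dephasing $Q$; this is precisely what decouples the branch-entropy term $H(\{p_k\})$ from the within-branch coherences $\Cr(\ket{\psi_k}\bra{\psi_k})$. If the $\ket{k}_C$ were orthonormal but not basis-aligned, the off-diagonal branch terms would generally survive dephasing and the identity would fail. As consistency checks one can note that $m=1$ recovers $\Cr(\ket{\Psi}\bra{\Psi})=\Cr(\ket{\psi_1}\bra{\psi_1})$, and taking every $\ket{\psi_k}$ to be a computational-basis state recovers the $W$-type relation $\Cr=H(\{p_k\})$, consistent with $\Cr(\ket{W_n}\bra{W_n})=\log_2 n$.
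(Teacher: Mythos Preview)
Your argument is correct and essentially identical to the paper's: both use purity to reduce $\Cr(\ket{\Psi}\bra{\Psi})$ to $S(\rho_{\diag})$, identify the dephased state as the classical distribution $p_k|c^{(k)}_j|^2$ over pairs $(k,j)$, and split the entropy via the chain rule $H(k,j)=H(k)+H(j|k)$ (what you call the block-diagonal grouping identity), finally using purity of each $\ket{\psi_k}$ to recognize $H(j|k{=}k)=\Cr(\ket{\psi_k}\bra{\psi_k})$. Your additional remark that the $\{\ket{k}_C\}$ must be actual computational-basis vectors, not merely orthonormal, is a useful caveat the paper leaves implicit.
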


\begin{proof}
Since $\ket{\Psi}$ is pure, $\Cr(\ket{\Psi}\bra{\Psi})=S\!\big(\Delta(\ket{\Psi}\bra{\Psi})\big)$.
Dephasing in the product computational basis yields a diagonal distribution over
pairs $(k,j)$ with probabilities $p_k\,|\braket{j}{\psi_k}|^2$. Hence
\begin{eqnarray}
S\!\big(\Delta(\ket{\Psi}\bra{\Psi})\big)&=&H(k,j)=H(k)+H(j|k)\\
&=&H(\{p_k\})+\sum_k p_k H(\{|\braket{j}{\psi_k}|^2\})\nonumber,
\end{eqnarray}
and for each pure $\ket{\psi_k}$ one has $\Cr(\ket{\psi_k}\bra{\psi_k})=H(\{|\braket{j}{\psi_k}|^2\})$.
\end{proof}

Lemma~\ref{lem:branch_superposition} isolates a universal contribution --
\emph{branching over $m$ classical alternatives adds at most $\log_2 m$ bits of coherence}
(independent of the quantum payload), with equality for uniform branching and
incoherent $\ket{\psi_k}$.

\subsection{A general upper bound: message coherence + branching budget}
\label{subsec:general_bound}
We now show how Lemma~\ref{lem:branch_superposition} leads to a general coherence
bound for LOCC communication protocols. Suppose that within the first $i$ stages
the protocol performs measurements with outcome alphabets of sizes
$m_1,m_2,\dots,m_{L(i)}$, where $L(i)$ counts the number of measurement layers up
to stage $i$. Denote by $\Cr(\sigma_R)$ the coherence in the shared resource state
and by $\Cr(\rho_M)$ the input message coherence.

\begin{proposition}[Coherence overhead bound for LOCC-assisted protocols]
\label{prop:general_budget}
For any stage $i$ in an LOCC-assisted protocol $\Pi$ as in Eq.~\eqref{eq:rho_i_general},
\begin{equation}
\Cr(\rho_i)\;\le\; \Cr(\rho_M) + \Cr(\sigma_R) + \sum_{\ell=1}^{L(i)} \log_2 m_\ell.
\label{eq:general_budget}
\end{equation}
Consequently,
\begin{equation}
\mathcal{C}_{\max}(\Pi;\rho_M,\sigma_R)\;\le\; \Cr(\rho_M) + \Cr(\sigma_R)
+ \sum_{\ell=1}^{L(\mathrm{all})}\log_2 m_\ell,
\label{eq:general_budget_peak}
\end{equation}
where $L(\mathrm{all})$ is the total number of measurement layers in the protocol.
\end{proposition}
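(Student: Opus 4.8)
The plan is to establish the stage-wise inequality~\eqref{eq:general_budget} directly; the peak bound~\eqref{eq:general_budget_peak} then follows immediately, because the right-hand side of~\eqref{eq:general_budget} can only increase as more measurement layers are counted, so maximizing the left-hand side over $i$ yields $\mathcal{C}_{\max}(\Pi;\rho_M,\sigma_R)=\max_i\Cr(\rho_i)\le\Cr(\rho_M)+\Cr(\sigma_R)+\sum_{\ell=1}^{L(\mathrm{all})}\log_2 m_\ell$. Thus all the content is in~\eqref{eq:general_budget}.

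To prove~\eqref{eq:general_budget} I would pass to the \emph{coherent dilation} of the protocol map $\Lambda_i$ and use only three properties of $\Cr$: additivity on tensor products; invariance under the incoherent ``wiring'' operations---basis-permuting unitaries, controlled Paulis, and CNOT-type coherent copies, all of which have incoherent inverses; and non-increase under discarding a subsystem, $\Cr(\Tr_B\rho_{AB})\le\Cr(\rho_{AB})$ (partial trace is incoherent; alternatively this follows from monotonicity of the relative entropy, since $\Cr$ is the minimum relative entropy between a state and the set of incoherent states). Concretely, replace the non-selective measurement of layer $\ell$ by (i)~the local basis rotation $V_\ell$ carrying its measurement basis to the computational basis, (ii)~a coherent copy of the measured register $X_\ell$ onto a fresh register $C_\ell$ of dimension $m_\ell$ initialized in $\ket{0}$, and (iii)~feed-forward corrections realized as unitaries controlled on $C_\ell$; deferring all dephasings to the end of the circuit is then equivalent to tracing out $C_1,\dots,C_{L(i)}$. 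This writes $\rho_i=\Tr_{C_1\cdots C_{L(i)}}[\widetilde\rho_i]$ with $\widetilde\rho_i=\mathcal{W}\big(\rho_M\otimes\sigma_R\otimes\ket{0}\!\bra{0}_{A}\otimes\ket{0}\!\bra{0}_{C_1\cdots C_{L(i)}}\big)$, where $\mathcal{W}$ is a composition of wiring unitaries, coherent copies, and the $V_\ell$; hence $\Cr(\rho_i)\le\Cr(\widetilde\rho_i)$ and it suffices to track $\Cr$ along $\mathcal{W}$.

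Tracking $\Cr$ along $\mathcal{W}$: the input has $\Cr=\Cr(\rho_M)+\Cr(\sigma_R)$ by additivity (the ancilla and copy registers start incoherent); wiring unitaries and coherent copies leave $\Cr$ unchanged; so every increase is due to a single $V_\ell$. Since $V_\ell$ is a unitary supported on a register of dimension $m_\ell$, it changes $\Cr$ of the global state by at most $\log_2 m_\ell$---after full dephasing its net effect on $\Cr$ is a change in a conditional Shannon entropy over an $m_\ell$-letter alphabet, which lies in $[-\log_2 m_\ell,\log_2 m_\ell]$. This is the state-independent counterpart of the branching identity Lemma~\ref{lem:branch_superposition}: just after $V_\ell$ and its copy the state has, conditionally on earlier outcomes, the branch form $\sum_k\sqrt{p_k}\,\ket{k}_{X_\ell}\ket{k}_{C_\ell}\ket{\psi_k}$, whose coherence exceeds the average branch coherence $\sum_k p_k\Cr(\ket{\psi_k}\!\bra{\psi_k})$ by exactly $H(\{p_k\})\le\log_2 m_\ell$. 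Summing over the $L(i)$ layers gives $\Cr(\widetilde\rho_i)\le\Cr(\rho_M)+\Cr(\sigma_R)+\sum_{\ell=1}^{L(i)}\log_2 m_\ell$, which is~\eqref{eq:general_budget}.

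The step I expect to be the genuine obstacle is bookkeeping rather than estimation: one must be able to attribute every coherence-generating gate to a measurement layer. In teleportation this is automatic---the only non-incoherent gate is the Hadamard mapping the Bell basis to the computational basis, it acts exactly on the qubit that is then measured, so it is precisely the $V_\ell$ of that layer. In general, though, one should restrict the protocol model so that coherence-generating unitaries act only on registers that are subsequently measured (the natural normal form for an LOCC communication primitive), or absorb them into the adjacent measurement layer; without such a convention a stray rotation on an idle ancilla could raise $\Cr(\rho_i)$ with no matching $\log_2 m_\ell$ term. Two secondary points to verify are the extension to coarse-grained measurements and general POVMs (through a Naimark-type dilation in which the recorded register still has dimension $m_\ell$) and the faithfulness of the dilation when feed-forward is implemented via classically controlled unitaries---both routine once the model is fixed. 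With these conventions in place the displayed chain of inequalities closes and both bounds follow.
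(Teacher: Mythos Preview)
Your approach matches the paper's proof sketch closely: both pass to a coherent dilation by inserting explicit classical registers that store the would-be outcomes prior to dephasing, then invoke Lemma~\ref{lem:branch_superposition} layer by layer to bound each branching contribution by $H(\{p_k^{(\ell)}\})\le\log_2 m_\ell$, and assemble the result using additivity on tensor products. Your version is somewhat more explicit---you separate the basis rotation $V_\ell$ from the incoherent ``wiring'' (coherent copies, controlled corrections, permutations) and use partial-trace monotonicity of $\Cr$ rather than a final dephasing---but these are cosmetic variants of the same argument.

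Your closing paragraph, however, is not mere bookkeeping: it names a limitation that the paper's sketch leaves implicit. The protocol model in Sec.~\ref{subsec:protocol_model} allows arbitrary unitaries $U_i$ between measurement layers, and a coherence-generating rotation on a register that is never measured (a Hadamard on an idle ancilla, say) would raise $\Cr(\rho_i)$ with no matching $\log_2 m_\ell$ term, so~\eqref{eq:general_budget} cannot hold at that level of generality. Both your argument and the paper's are valid only under the normal-form assumption you state---every coherence-generating gate acts on a register that is subsequently measured and can therefore be absorbed into the corresponding $V_\ell$. The paper takes this for granted; you are correct to flag it as the real hypothesis doing the work.
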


\begin{proof}[Proof sketch]
Insert explicit classical registers that coherently store the would-be outcomes
prior to dephasing, so that each measurement layer can be modeled as
(i) an isometry that correlates a computational-basis register $\ket{k}$ with the
quantum system and (ii) subsequent dephasing of that register.
At the coherent (pre-dephasing) level, each measurement layer produces a state of
the form Eq.~\eqref{eq:branch_state} with $m_\ell$ branches. Applying
Lemma~\ref{lem:branch_superposition} layer-by-layer shows that the additional
coherence created by the $\ell$th branching is bounded by $H(\{p_k^{(\ell)}\})
\le \log_2 m_\ell$, while the remaining term is an average of post-branch
coherences. Using convexity of $\Cr$ and additivity on tensor products yields the
stated bound.
\end{proof}

Equation~\eqref{eq:general_budget} makes precise the sense in which large
intermediate coherence can be \emph{structural} even when the implemented channel
is information-neutral (e.g., identity). The message coherence $\Cr(\rho_M)$ is
a genuine input resource, whereas the sum $\sum_\ell \log_2 m_\ell$ quantifies a
protocol-induced branching overhead associated with classical feed-forward.
Teleportation corresponds to one measurement layer with $m_1=4$, yielding the
universal $\log_2 4=2$ offset per teleportation gadget ( See
Appendix~\ref{app:two_bits}).

\subsection{Channel invariance vs. circuit cost}
\label{subsec:channel_vs_circuit}
The bound in Proposition~\ref{prop:general_budget} is deliberately orthogonal to
channel capacities. Two protocols that implement the same CPTP map $\mathcal{N}$
(e.g., identity) can have different $\mathcal{C}_{\max}$ depending on their
realization (measurement structure, classical control depth, and resource
state). This observation suggests an implementation-oriented notion of resource
cost: coherence quantifies \emph{superposition complexity} of the circuit
realizing $\mathcal{N}$, rather than any increase in transmissible information.
This perspective complements information-theoretic figures of merit (capacities,
Holevo quantities) by providing an internal dynamical accounting.
\section{Implications for quantum networks and the quantum internet}
\label{sec:qinternet}
Quantum teleportation is a foundational primitive for quantum networking \cite{Meng2025QuantumNetworks}--
entanglement distribution, entanglement swapping, and repeater-based routing
ultimately rely on Bell-type measurements and classical feed-forward.
Our coherence accounting therefore has direct implications for the internal
resource requirements of quantum-internet architectures \cite{Kumar2025QuantumInternetOverview}.

\subsection{Entanglement swapping: a coherence-cost benchmark}
\label{subsec:swapping}
Entanglement swapping consumes two Bell pairs to entangle distant nodes.
Consider two Bell pairs $\ket{\Phi^+}_{AB}\otimes \ket{\Phi^+}_{CD}$, where the
intermediate node holds qubits $B$ and $C$ and performs a Bell measurement on
$BC$. Immediately prior to this measurement, the circuit implements a coherent
branching over the four Bell outcomes. Introducing an explicit classical record
register $K$ (to store the Bell outcome coherently), the pre-dephasing state can
be written in the form of Eq.~\eqref{eq:branch_state} with $m=4$ and uniform
probabilities $p_k=1/4$ in the ideal case. By Lemma~\ref{lem:branch_superposition},
the branching contribution is
\begin{equation}
\Delta {\Cr}_{\mathrm{swap}} \le \log_2 4 = 2 \ \text{bits},
\end{equation}
with equality for uniform outcomes and incoherent conditional payloads.
Thus, entanglement swapping exhibits the same \emph{branching-entropy signature}
as teleportation: a universal two-bit coherence contribution associated with the
four-way Bell-outcome branching.

This provides a natural benchmark for network operations -- Bell-measurement-based
primitives (swapping, teleportation) require transient delocalized coherence
peaks whose magnitude is controlled by the branching alphabet size.

\subsection{Repeater chains and coherence peak scaling}
\label{subsec:repeaters}
A quantum repeater chain of $N$ elementary links distributes entanglement over a
long distance via repeated entanglement swapping at intermediate nodes.
If swapping operations at different nodes are executed sequentially, the peak
coherence at any time is dominated by a single swapping operation, and therefore
remains $O(1)$ in $N$ (up to resource-state contributions and local overheads).
By contrast, if $s$ swapping operations are executed in parallel (e.g., to
increase throughput by multiplexing), the branching budgets add. A conservative
coherence overhead estimate based on Proposition~\ref{prop:general_budget} yields
\begin{equation}
\mathcal{C}_{\max}^{(\mathrm{parallel\ swaps})}
\;\lesssim\; \Cr(\sigma_R) + 2s,
\label{eq:parallel_swaps}
\end{equation}
where $\sigma_R$ denotes the joint state of the entanglement resources used in
that time window. This estimate highlights a design tradeoff: multiplexing can
increase entanglement distribution rates but requires hardware capable of
supporting larger transient coherence peaks (and thus greater vulnerability to
dephasing during those stages).

\subsection{Noisy links: coherence as an internal fragility diagnostic}
\label{subsec:noisy_network}
In realistic networks, distributed Bell pairs are imperfect and local operations
and memories are noisy. While standard performance metrics focus on output
quantities (fidelity, entanglement rate, QBER\footnote{The fraction of bits that are received incorrectly in a quantum communication protocol.}), stage-resolved coherence provides
an internal diagnostic that localizes \emph{where} the protocol is most fragile.
In particular, dephasing noise suppresses delocalized coherence at intermediate
stages without necessarily changing the intended channel action at the ideal level.
Within the present framework, this effect appears as a reduction of
$\mathcal{C}_{\max}$ and as a flattening of the coherence profile near the
pre-measurement stages where branching occurs.

More concretely, if a swapping (or teleportation) stage is followed by a
dephasing channel $\mathcal{D}_\lambda$ acting on the relevant registers, then
by data processing of quantum relative entropy one has
\begin{equation}
\Cr\!\left(\mathcal{D}_\lambda(\rho)\right)
= D\!\left(\mathcal{D}_\lambda(\rho)\,\|\,\Delta\mathcal{D}_\lambda(\rho)\right)
\le D\!\left(\rho\,\|\,\Delta(\rho)\right)=\Cr(\rho),
\label{eq:dephasing_monotone}
\end{equation}
so coherence decreases monotonically under physical dephasing noise. Tracking
this decrease stage-by-stage yields a principled way to identify the coherence
bottlenecks that dominate network fragility, complementing end-to-end fidelity
metrics.

\subsection{Operational relevance for a quantum internet}
\label{subsec:operational_relevance}
In a more speculative direction, the above considerations suggest that coherence accounting can inform quantum
internet \cite{Kumar2025QuantumInternetOverview}  engineering in at least two ways. First, it provides an
implementation-oriented resource metric that complements channel capacities --
protocols that realize the same end-to-end channel (e.g., teleportation-based
routing approximating identity transmission) can differ substantially in their
peak coherence costs depending on scheduling, multiplexing, and classical control
structure. Second, coherence profiles provide a diagnostic lens for noise --
by identifying where transient delocalized coherence is concentrated, one
pinpoints the stages where error mitigation, improved memory coherence times, or
hardware-level dynamical decoupling will have the largest impact. Of course, all of those considerations must be carefully investigated to be confirmed.

\section{Conclusion}
In this work we have developed a coherence-based perspective on quantum
communication protocols, focusing on how coherence is generated, redistributed,
and consumed during protocol execution. Using the relative entropy of coherence
as a circuit-level diagnostic, we have shown that coherence captures dynamical
aspects of quantum communication that are not visible through entanglement-based
analyses alone.

Our comparative study of superdense coding and quantum teleportation reveals a
striking contrast. Multipartite resource states capable of supporting scalable
superdense coding can exhibit only logarithmic or even constant coherence
scaling, highlighting that coherence need not track communication capacity in a
straightforward way. In sharp contrast, teleportation displays an unavoidable
and extensive coherence cost that grows linearly with the number of teleported
qubits. Through a stage-resolved analysis, we have shown that this cost is
intrinsic to the protocol itself and reflects the global superposition structure
required to implement an identity channel on an unknown quantum state.

Beyond these specific examples, we have formulated a general coherence
accounting framework for LOCC-assisted quantum communication protocols. By
explicitly isolating message-dependent contributions from protocol-induced
branching costs, we derived general bounds showing that large intermediate
coherence can arise solely from the structure of classical feed-forward and
measurement-induced branching, even when the implemented channel is
information-neutral. This establishes protocol-induced coherence as a structural
property of circuit realizations rather than a direct indicator of transmitted
information.

Importantly, we have demonstrated explicitly that extensive intermediate
coherence generation is fully consistent with information-theoretic bounds,
including the Holevo limit. The coherence quantified in this work characterizes
transient global superpositions that are operationally inaccessible due to
measurement-induced dephasing and does not correspond to an increase in
accessible classical information. This clarifies the conceptual distinction
between coherence as a circuit-level resource and information capacity as
constrained by communication theory.

Our results further suggest that coherence provides a useful lens for analyzing
quantum network primitives relevant to a quantum internet, including
teleportation-based routing, entanglement swapping, and repeater architectures.
In such settings, coherence accounting offers an internal diagnostic of
superposition complexity and noise sensitivity that complements standard
end-to-end metrics such as fidelity and rate.

More broadly, coherence as quantified here does not merely characterize static
properties of quantum states, but acts as a measure of the superposition
complexity required to implement a given quantum communication primitive at the
circuit level. Extensions of this framework to noisy channels, imperfect
resource states, and fault-tolerant architectures may therefore provide valuable insight into coherence overheads and
circuit-level dynamical constraints of realistic quantum technologies.

\appendix
\section{Two-bit coherence offset}
\label{app:two_bits}

Here we provide an explicit derivation of the universal two-bit coherence offset that appears in the teleportation scaling
law, Eq.~\eqref{eq:scaling}. We consider a single teleportation gadget acting on an incoherent input state $\ket{0}$ and a
shared Bell pair $\ket{\Phi^+}=(\ket{00}+\ket{11})/\sqrt{2}$. At the stage immediately prior to measurement, after the
application of the CNOT and Hadamard gates on Alice’s side, the global three-qubit state (message qubit plus Bell pair)
takes the form
\begin{equation}
\ket{\Psi}=\frac{1}{2}\big(\ket{000}+\ket{011}+\ket{100}+\ket{111}\big),
\label{eq:tp_state}
\end{equation}
where the ordering of subsystems is understood as message qubit, Alice’s ancilla, and Bob’s qubit.

The state in Eq.~\eqref{eq:tp_state} has four computational-basis components with equal amplitudes. Its density matrix
therefore satisfies $S(\rho)=0$, while the fully dephased state $\rho_{\diag}$ has four nonzero diagonal elements equal to
$1/4$, yielding
\begin{equation}
S(\rho_{\diag}) = -\sum_{k=1}^{4} \frac{1}{4}\log_2\frac{1}{4} = 2.
\end{equation}
Using the definition of the relative entropy of coherence, $\Cr(\rho) = S(\rho_{\diag}) - S(\rho)$, we obtain $\Cr=2$ bits.
This value is independent of the input message as long as the input is incoherent in the computational basis, and it
depends only on the structure of the teleportation circuit at the maximal-coherence stage. As a result, each teleportation
gadget contributes a fixed two-bit coherence offset, leading to the linear scaling identified in Eq.~\eqref{eq:scaling} for
parallel teleportation of $n$ qubits.

\begin{acknowledgments}
M.C.O. acknowledges partial financial support from the National Institute of Science and Technology for Applied Quantum
Computing through CNPq (Process No.~408884/2024-0) and from the São Paulo Research Foundation (FAPESP), through the Center
for Research and Innovation on Smart and Quantum Materials (CRISQuaM, Process No.~2013/07276-1).
\end{acknowledgments}

\bibliography{references}

\end{document}